%
\documentclass[final,3p,times]{elsarticle}


\usepackage{amsmath,amsthm,amssymb,amsfonts,amscd}
%

\usepackage{graphicx}
\usepackage{xspace}
\usepackage{color,soul}

\def\N{\mathbb N}
\def\C{\mathbb C}

\renewcommand{\i}{{\mathrm i}}
\newcommand\e{\mathrm e}

\newcommand\diag{\mathrm{diag}}
\renewcommand\S{\mathcal S}

\newcommand\tg{\mathrm{tg}\,}
\newcommand\cotg{\mathrm{cotg}\,}
\newcommand\PS{\textsf{PS}\xspace}
\newcommand\MPS{\textsf{MPS}\xspace}
\newtheorem{lemma}{Lemma}[section]
\newtheorem{proposition}[lemma]{Proposition}

\newtheorem{corollary}[lemma]{Corollary}
\newtheorem{observation}[lemma]{Observation}
\newtheorem{definition}[lemma]{Definition}

\newtheorem{remark}[lemma]{Remark}
\newtheorem{notation}[lemma]{Notation}

\begin{document}

\begin{frontmatter}

\title{Quantum graph vertices with permutation-symmetric scattering probabilities}
\author
{Ond\v{r}ej Turek}
\ead{ondrej.turek@kochi-tech.ac.jp}
\author
{Taksu Cheon}
\ead{taksu.cheon@kochi-tech.ac.jp}
\address
{Laboratory of Physics, Kochi University of Technology,
Tosa Yamada, Kochi 782-8502, Japan}

\date{\Today}

\begin{abstract}
Boundary conditions in quantum graph vertices are generally given in terms of a unitary matrix $U$. Observing that if $U$ has at most two eigenvalues, then the scattering matrix $\S(k)$ of the vertex is a linear combination of the identity matrix and a fixed Hermitian unitary matrix, we construct vertex couplings with this property: For all momenta $k$, the transmission probability from the $j$-th edge to $\ell$-th edge is independent of $(j,\ell)$, and all the reflection probabilities are equal. We classify these couplings according to their scattering properties, which leads to the concept of generalized $\delta$ and $\delta'$ couplings.
\end{abstract}

\begin{keyword}
scattering matrix \sep singular vertex
\sep equal transmission \sep quantum wire
\PACS 03.65.-w \sep 03.65.Nk \sep 73.21.Hb%
\end{keyword}

\end{frontmatter}


\section{Introduction}

The study of quantum graph, the quantum mechanics of particle motion on interconnected one-dimensional lines, is becoming increasingly relevant to laboratory experiments on quantum wires and quantum dots, as experimentalists are gaining more control over the electronic states in these artificial objects in atomic scale.
The designing of quantum graphs tailor-made to have desired properties is now high on our agenda.  It amounts to exploring the proper characterization of the scattering matrices of quantum graphs, and identifying the structure of graph vertices that give rise to those scattering matrices. 

As a starting point for the designing of quantum graphs, we have considered, in our previous work \cite{TC11, CET11}, the momentum-independent equally-transmitting quantum graph vertex, for which incoming wave from any graph edge results in equal reflections and also equal transmissions to all other edges.  The momentum independence is guaranteed by limiting the graph vertices to scale invariant ``Fulop-Tsutsui'' type, which simplifies the situation considerably, since the scattering matrices of such vertices are Hermitian unitary \cite{CT10}.  The simplification has yielded a mathematical reward in that we have been able to connect the designing {\it equally-transmitting quantum vertices} to the concept of modularly permutation symmetric unitary matrix, which amounts to an extension of Hadamard and conference matrices.
This simplification, however, has its own drawbacks:  The considered class does not include such natural equally-transmitting vertices as $\delta$ coupling and also  $\delta'$ coupling.  For their inclusion, we naturally have to go beyond the scale invariant subfamily of vertices, and consider more general momentum-dependent scattering matrices.

In this article, we examine a new class of more general incoming-momentum dependent equally-transmitting quantum vertices that include $\delta$ and $\delta'$ couplings.  To that end, we take a slightly different path from our previous work, in which we have relied on ``ST-form'' of connection matrix \cite{CET10}, and utilize Harmer-Kostrykin-Schrader unitary-matrix characterization instead.  
It turns out that we obtain an advantage by further extending the equally-transmitting graph with quantum vertex described by unitary matrices having at most two eigenvalues.  We show that such quantum vertex is a natural generalization of scale invariant vertex, in the sense that its scattering matrix is characterized by shifted and scaled Hermitian unitary matrix. It is shown that, with further delimitations, we obtain several subclasses of incoming-momentum dependent equally-transmitting quantum vertices that correspond to the generalizations of $\delta$, $\delta'$, and ``mixed'' couplings.

The article is organized as follows:  In the next section, we lay out preliminary materials to set the stage.  The quantum vertices described by ``quadratically closed''  unitary matrices are introduced in the third section, and their scattering matrices are analyzed in the fourth section.  In the fifth section, quantum vertices with generalized permutation symmetry are derived as an important special class of vertices studied in previous sections, and the concepts of generalized $\delta$ and $\delta'$ couplings are introduced.  The sixth section concludes the article.


\section{Preliminaries}

\subsection*{Boundary conditions and scattering matrix}

Vertex couplings in quantum 
star graphs are mathematically given by boundary conditions which couple the vectors $\Psi$ and $\Psi'$ of the boundary values. The most general form of these conditions is
\begin{equation}\label{U-form}
(U-I)\Psi+\i(U+I)\Psi'=0\,,
\end{equation}
where $U$ is a unitary matrix (for a derivation see \cite{Ha00} and \cite{KS00}).

A vertex coupling can be uniquely determined also by its \emph{scattering matrix} $\S(k)$. $\S(k)$ is a unitary energy-dependent matrix (more precisely speaking, a matrix function),
\begin{equation}
\label{ScatMat}
\S(k)=\left(\begin{array}{cccc}
 {\cal R}_{1}(k)   & {\cal T}_{12}(k) & \cdots & {\cal T}_{1n}(k) \\
 {\cal T}_{21}(k) & {\cal R}_{2}(k)   & \cdots & {\cal T}_{2n}(k) \\
 \vdots & & & \vdots \\
 {\cal T}_{n1}(k) & {\cal T}_{n2}(k)& \cdots & {\cal R}_{n}(k) \end{array}\right)\,,
\end{equation}
where $n$ is the degree of the vertex, $\mathcal{R}_j(k)$ is the reflection amplitude for the $j$-th line and $\mathcal{T}_{j\ell}(k)$ is the transmission amplitude from the $\ell$-th to the $j$-th line. Here and elsewhere, the parameter $k$ denotes the square root of the energy of the particle coming into the vertex and at the same time the particle momentum, as we assume for simplicity $\hbar=2m=1$.

The matrix $U$ and the scattering matrix $\S(k)$ are related in the following way: $U=\S(1)$ and
\begin{equation}\label{S-matrix}
\S(k)=\bigl[(k-1)U+(k+1)I\bigr]^{-1}\cdot\bigl[(k+1)U+(k-1)I\bigr]\,.
\end{equation}

\begin{remark}\label{decoupled}
Matrix $U$ is diagonal if and only if $\S(k)$ is diagonal for all $k$. This singular case physically represents a vertex whose edges are all 
isolated (a \emph{decoupled vertex}).
\end{remark}

\subsection*{\PS and \MPS unitary matrices}

\begin{notation}\label{IJ}
Everywhere in the paper, $I$ and $J$ denote an identity matrix and a square matrix all of whose entries are $1$, respectively. In cases when we need to explicitely express the order $\ell$ of these matrices, we use the symbols $I^{(\ell)}$ and $J^{(\ell)}$.
\end{notation}

\begin{definition}\label{MPS}
\textit{(i)}\quad A square matrix $M\in\C^{n,n}$ is called \emph{permutation-symmetric} (\PS) if there are $r,t\in\C$ such that the entries of $M$ satisfy
$$
M_{jj}=r \quad \text{and} \quad M_{j\ell}=t \quad \text{for all $j,\ell=1,\ldots,n$, $j\neq \ell$}\,.
$$
\textit{(ii)}\quad We call a square matrix $M\in\C^{n,n}$ \emph{modularly permutation-symmetric} (\MPS) if there are $r,t\geq0$ such that the entries of $M$ satisfy
$$
|M_{jj}|=r \quad \text{and} \quad |M_{j\ell}|=t \quad \text{for all $j,\ell=1,\ldots,n$, $j\neq \ell$}\,.
$$
\end{definition}

\begin{remark}
If a matrix $M$ is \PS (or \MPS) and $P$ is a permutation matrix of the same size, then the matrix $PMP^{-1}$ is \PS (or \MPS, respectively) as well.
\end{remark}

\PS and especially \MPS matrices that are at the same time \emph{unitary} are important in various applications. It is easy to characterize all unitary \PS matrices (cf.~\cite{ET06}):
\begin{proposition}\label{aIbJ}
A unitary $n\times n$ matrix $U$ is permutation-symmetric if and only if $U=aI^{(n)}+bJ^{(n)}$ for $a,b\in\C$ satisfying $|a|=1$ and $|a+nb|=1$.
\end{proposition}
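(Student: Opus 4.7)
The plan is to reduce the problem to a direct calculation by exploiting the fact that any permutation-symmetric matrix has the form $aI+bJ$, and then to translate the unitarity condition $U^*U=I$ into conditions on $|a|$ and $|a+nb|$ via the algebra of the matrix $J$.

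First I would observe the trivial equivalence at the level of matrix shape: a matrix $M$ with diagonal entries $r$ and off-diagonal entries $t$ (as in Definition~\ref{MPS}(i)) can be written as $M=(r-t)I^{(n)}+tJ^{(n)}$, and conversely every matrix $aI^{(n)}+bJ^{(n)}$ is PS with $r=a+b$, $t=b$. So it suffices to show that a matrix $U=aI+bJ$ is unitary if and only if $|a|=1$ and $|a+nb|=1$.

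Next I would compute $U^*U$. Using $J^*=J$ and the key identity $J^2=nJ$, one obtains
\begin{equation*}
U^*U=(\bar a I+\bar b J)(aI+bJ)=|a|^2 I+\bigl(a\bar b+\bar a b+n|b|^2\bigr)J\,.
\end{equation*}
Since $I$ and $J$ are linearly independent, the equation $U^*U=I$ is equivalent to the pair of scalar conditions $|a|^2=1$ and $a\bar b+\bar a b+n|b|^2=0$.

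Finally I would rewrite the second scalar condition in the form claimed by the proposition. Expanding,
\begin{equation*}
|a+nb|^2=(a+nb)(\bar a+n\bar b)=|a|^2+n\bigl(a\bar b+\bar a b+n|b|^2\bigr),
\end{equation*}
so under $|a|=1$ the relation $a\bar b+\bar a b+n|b|^2=0$ is equivalent to $|a+nb|^2=1$, completing the proof. There is essentially no obstacle here: the only thing to double-check is that the two conditions $|a|=1$ and $|a+nb|=1$ are both needed (indeed, $|a|^2=1$ comes from the coefficient of $I$ and the second condition comes from the coefficient of $J$ in $U^*U-I$, and they are independent since $J$ is not a scalar multiple of $I$ for $n\geq 2$; the case $n=1$ is trivial).
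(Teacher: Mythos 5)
Your proof is correct. The paper itself gives no proof of Proposition~\ref{aIbJ} (it simply cites~\cite{ET06}), so there is nothing to compare line by line; your direct computation using $J^*=J$, $J^2=nJ$, and the linear independence of $I$ and $J$ is the standard argument and is complete, including the rewriting $|a+nb|^2=|a|^2+n(a\bar b+\bar a b+n|b|^2)$ that converts the off-diagonal condition into $|a+nb|=1$. An equally short alternative, more in the spirit of Observation~\ref{PS2ev}, is to note that $U=aI+bJ$ is normal with $\sigma(U)=\{a,a+nb\}$, so unitarity is equivalent to both eigenvalues having modulus one; either route is fine.
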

On the other hand, full characterization of unitary \MPS matrices is a difficult problem and only partial results are known.

For any non-diagonal \PS or \MPS matrix we set $d:=\frac{|M_{jj}|}{|M_{j\ell}|}$ ($j\neq\ell$). In our further considerations, an essential role will be played by (non-diagonal) \emph{Hermitian} unitary \MPS matrices, i.e., by unitary $n\times n$ matrices of the type
$$
M=\frac{1}{\sqrt{d^2+n-1}}\left(\begin{array}{ccccc}
\pm d & \e^{\i\alpha_{12}} & \e^{\i\alpha_{13}} & \cdots & \e^{\i\alpha_{1n}} \\
\e^{-\i\alpha_{12}} & \pm d & \e^{\i\alpha_{23}} & \cdots & \e^{\i\alpha_{2n}} \\
\e^{-\i\alpha_{13}} & \e^{-\i\alpha_{23}} & \pm d & \cdots & \e^{\i\alpha_{3n}} \\
\vdots & \vdots & & \ddots & \vdots \\
\e^{-\i\alpha_{1n}} & \e^{-\i\alpha_{2n}} & \e^{-\i\alpha_{3n}} & \cdots & \pm d
\end{array}\right)\,,
$$
where $\alpha_{j\ell}$ are real numbers representing phases.

\subsection*{Permutation-symmetric couplings}

As we will see, \PS and \MPS unitary matrices are essential objects in the study of vertex couplings with equal scattering probabilities.

\begin{definition}\label{PScoupl}
We call a vertex coupling \emph{permutation-symmetric} (\PS) if the corresponding boundary conditions~\eqref{U-form} are equivalent to
$$
(U-I)P\Psi+\i(U+I)P\Psi'=0
$$
for any permutation matrix $P\in\mathcal{S}_n$.
\end{definition}
In other words, the coupling is permutation-symmetric if the vertex is invariant under renumbering of its edges.
Equivalent definitions can be easily obtained using Equation~\eqref{S-matrix} and Proposition~\ref{aIbJ}:

\begin{observation}\label{PS U}
Let a vertex coupling be given by boundary conditions~\eqref{U-form}. The following statements are equivalent.
\begin{itemize}
\item The vertex coupling is permutation-symmetric.
\item The matrix $U$ is \PS.
\item $U=aI^{(n)}+bJ^{(n)}$ for $a,b\in\C$ satisfying $|a|=1$ and $|a+nb|=1$.
\item The scattering matrix $\S(k)$ is \PS for all $k>0$.
\end{itemize}
\end{observation}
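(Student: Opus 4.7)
The plan is to establish the four-way equivalence by treating (2)$\Leftrightarrow$(3), (1)$\Leftrightarrow$(2), and (2)$\Leftrightarrow$(4) in turn. The first is just a restatement of Proposition~\ref{aIbJ} for unitary $U$, so the real work lies in the other two.

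For (1)$\Leftrightarrow$(2), I would begin by rewriting the auxiliary boundary conditions in Definition~\ref{PScoupl}. Left-multiplying $(U-I)P\Psi+\i(U+I)P\Psi'=0$ by $P^{-1}$ transforms them into
\begin{equation*}
(P^{-1}UP-I)\Psi+\i(P^{-1}UP+I)\Psi'=0,
\end{equation*}
which is the standard form \eqref{U-form} with the unitary matrix $P^{-1}UP$ in place of $U$. Since the parametrization $U\mapsto\,$coupling from \cite{Ha00,KS00} is bijective, the two systems describe the same coupling precisely when $P^{-1}UP=U$. Hence the coupling is \PS if and only if $U$ commutes with every permutation matrix $P\in\S_n$. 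The commutant of the defining representation of $\S_n$ on $\C^n$ is the two-dimensional algebra spanned by $I$ and $J$ (applying an arbitrary transposition $(j,\ell)$ forces all diagonal entries to coincide and likewise all off-diagonal entries), and this is exactly condition~(2).

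For (2)$\Leftrightarrow$(4), the key observation is that the set $\mathcal{A}=\{\alpha I+\beta J:\alpha,\beta\in\C\}$ is a commutative subalgebra of $\C^{n,n}$, closed under inversion of its invertible elements; this follows from $J^2=nJ$, yielding the explicit formula $(\alpha I+\beta J)^{-1}=\alpha^{-1}I-\beta\alpha^{-1}(\alpha+n\beta)^{-1}J$ whenever $\alpha(\alpha+n\beta)\neq 0$. Substituting a \PS unitary $U\in\mathcal{A}$ into \eqref{S-matrix} keeps both factors inside $\mathcal{A}$, so $\S(k)\in\mathcal{A}$ for every $k>0$, which is (4). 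For the converse, setting $k=1$ in \eqref{S-matrix} gives $\S(1)=U$, so if $\S(k)$ is \PS for all $k>0$ then $U$ itself is \PS.

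The only subtle step is identifying ``equivalent boundary conditions'' with ``equal parametrizing unitary matrix''; this rests on the bijectivity of the Harmer-Kostrykin-Schrader parametrization and is really the sole ingredient beyond elementary algebra of $aI+bJ$ matrices. Everything else is a direct calculation inside the two-dimensional algebra $\mathcal{A}$.
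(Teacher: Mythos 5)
Your proof is correct and follows exactly the route the paper intends: the paper gives no written proof, merely noting that the equivalences ``can be easily obtained using Equation~\eqref{S-matrix} and Proposition~\ref{aIbJ}'', and your argument supplies precisely those missing details (the commutant-of-$\mathcal{S}_n$ argument for (1)$\Leftrightarrow$(2), Proposition~\ref{aIbJ} for (2)$\Leftrightarrow$(3), and closure of the algebra $\{\alpha I+\beta J\}$ under products and inverses together with $\S(1)=U$ for (2)$\Leftrightarrow$(4)). No gaps; the reliance on injectivity of the Harmer--Kostrykin--Schrader parametrization is the right justification for identifying equivalent boundary conditions with equal matrices $U$.
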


All the permutation-symmetric couplings form a $2$-parameter family that has been investigated in detail in \cite{ET06}. The family includes the physically most important vertex couplings, in particular the $\delta$ and $\delta'$ couplings, as it can be seen from Table~\ref{DeltaCouplings} (see also~\cite{Ex96a}).

\begin{table}[h]
\caption{Special permutation-symmetric vertex couplings ($U=aI^{(n)}+bJ^{(n)}$)}\label{DeltaCouplings}
\begin{tabular}{c|cc}
Coupling type & a & b \\[2pt]
\hline
\\[-7pt]
free coupling & $-1$ & $\frac{2}{n}$ \\[5pt]
$\delta$-coupling with parameter $\gamma$ & $-1$ & $\frac{2}{n+\i\gamma}$ \\[5pt]
$\delta'_s$-coupling with parameter $\gamma'$ & $1$ & $-\frac{2}{n-\i\gamma'}$ \\[5pt]
$\delta'$-coupling with parameter $\gamma'$ & $-\frac{n+\i\gamma'}{n-\i\gamma'}$ & $\frac{2}{n-\i\gamma'}$ \\[5pt]
$\delta_p$-coupling with parameter $\gamma$ & $\frac{n-\i\gamma}{n+\i\gamma}$ & $-\frac{2}{n+\i\gamma}$
\end{tabular}
\end{table}

\begin{observation}\label{PS2ev}
If $U$ is \PS, then it has at most two eigenvalues.
\end{observation}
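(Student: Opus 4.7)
The plan is to invoke Proposition~\ref{aIbJ}, which tells us that every permutation-symmetric unitary matrix $U$ of order $n$ admits the concrete representation $U = aI^{(n)} + bJ^{(n)}$ for some $a,b\in\C$ with $|a|=1$ and $|a+nb|=1$. Once this form is available, the spectrum of $U$ reduces immediately to the spectrum of $J^{(n)}$, so no further analytic work is required.

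The key observation is that $J^{(n)}$ has rank one: the all-ones vector $\mathbf{1}=(1,\ldots,1)^{T}$ is an eigenvector with eigenvalue $n$, while the hyperplane $\{x\in\C^{n}:\mathbf{1}^{T}x=0\}$ is the kernel and has dimension $n-1$. Consequently, $U=aI^{(n)}+bJ^{(n)}$ shares these invariant subspaces: it acts as multiplication by $a+nb$ on $\mathrm{span}(\mathbf{1})$ and as multiplication by $a$ on its orthogonal complement. Hence the only candidates for eigenvalues of $U$ are the two scalars $a$ and $a+nb$.

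Since this entire argument is a one-line consequence of the previously established proposition, I do not anticipate a serious obstacle. The only mildly delicate point worth mentioning is the degenerate case $b=0$, where $U=aI^{(n)}$ has the single eigenvalue $a$, so that the bound of ``at most two'' is still met (in fact is achieved by one). In all other cases $a+nb\neq a$, and the two eigenvalues are distinct with multiplicities $1$ and $n-1$ respectively.
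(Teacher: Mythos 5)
Your argument is correct and is essentially the paper's own proof: both write $U=aI^{(n)}+bJ^{(n)}$ via Proposition~\ref{aIbJ} and read off $\sigma(U)\subseteq\{a,\,a+nb\}$ from $\sigma(J^{(n)})=\{0,n\}$. Your extra remarks on the eigenspaces of $J^{(n)}$ and the degenerate case $b=0$ are fine but not needed.
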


\begin{proof}
Since $\sigma(J^{(n)})=\{0,n\}$, we have $\sigma(U)=\{a,a+bn\}$, hence $\#\sigma(U)\leq2$.
\end{proof}

\subsection*{Scale invariant couplings}

The stattering matrix generaly depends on the momentum $k$, but there exists a family of vertex couplings for which $\S(k)$ is $k$-independent. These couplings are called \emph{scale invariant}. With regard to Equation~\eqref{S-matrix}, scale invariant vertex couplings can be characterized in several ways:

\begin{observation}\label{MPSequiv}
The following statements are equivalent.
\begin{itemize}
\item The vertex coupling is scale invariant.
\item The vertex coupling be given by b.c.~\eqref{U-form} with a Hermitian matrix $U$.
\item The scattering matrix $\S(k)$ is Hermitian for all $k>0$.
\end{itemize}
\end{observation}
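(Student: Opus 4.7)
The plan is to establish the equivalence by showing that, for a unitary $U$, the scale invariance condition $\S(k)\equiv\S(1)=U$ collapses the Möbius-like relation \eqref{S-matrix} to the algebraic identity $U^{2}=I$, which for a unitary matrix is the same as Hermiticity. So the core of the argument reduces to algebra with \eqref{S-matrix}, with no delicate analytic input beyond checking invertibility of $(k-1)U+(k+1)I$ for every $k>0$.

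First I would prove the implication $\text{(2)}\Rightarrow\text{(1)},\text{(3)}$. Assuming $U$ is Hermitian and unitary, $U^{*}=U^{-1}=U$ so $U^{2}=I$. Multiplying the candidate $U$ on the right of $(k-1)U+(k+1)I$ gives
$$
[(k-1)U+(k+1)I]\,U=(k-1)U^{2}+(k+1)U=(k-1)I+(k+1)U,
$$
which is exactly the second factor in \eqref{S-matrix}. Hence $\S(k)=U$ for every $k>0$, so the coupling is scale invariant and $\S(k)$ is Hermitian, yielding both (1) and (3) simultaneously.

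Next, for the reverse directions, I would argue $\text{(1)}\Rightarrow\text{(2)}$ by using $U=\S(1)$: if $\S(k)$ is independent of $k$, then $\S(k)\equiv U$; substituting this back into \eqref{S-matrix} gives $(k-1)U^{2}=(k-1)I$ for all $k$, hence $U^{2}=I$, and combined with unitarity this gives $U^{*}=U$. The implication $\text{(3)}\Rightarrow\text{(2)}$ is immediate by specializing to $k=1$, since $\S(1)=U$ is then Hermitian. This closes the chain (2) $\Leftrightarrow$ (1) and (2) $\Leftrightarrow$ (3).

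The only real obstacle is a sanity check that \eqref{S-matrix} is well-defined, i.e.\ that $(k-1)U+(k+1)I$ is invertible for $k>0$. Since $U$ is unitary, its eigenvalues $\lambda$ lie on the unit circle, while a singular value of the above matrix would require $\lambda=-(k+1)/(k-1)$, a real number of modulus $\neq 1$ for every $k>0$ (with the case $k=1$ handled separately because the first factor becomes $2I$). This confirms that the manipulations above are valid throughout $k>0$ and the equivalence holds.
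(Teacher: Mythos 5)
Your proof is correct and follows exactly the route the paper intends: the paper states this as an Observation justified only by reference to Equation~\eqref{S-matrix}, and your argument is the explicit algebraic verification of that claim (the identity $[(k-1)U+(k+1)I]U=(k+1)U+(k-1)I$ for $U^2=I$, the converse $(k-1)U^2=(k-1)I$, and the invertibility check). Nothing is missing.
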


\subsection*{Equally-transmitting couplings}

Let us finish Preliminaries by introducing one more family of vertex couplings which is, in some respect, a generalization of permutation-symmetric couplings.

\begin{definition}\label{MPScoupl}
We say that a vertex coupling is \emph{equally-transmitting}\footnote{The term ``equally-transmitting'' (vertex or coupling) is not to be confused with the term ``equi-transmitting'' (matrix) introduced by Harrison, Smilansky and Winn in~\cite{HSW07}. Equi-transmitting matrix from~\cite{HSW07} is, in our notation, a unitary MPS matrix with $d=0$.} if its scattering matrix $\S(k)$ is \MPS for all $k>0$.
\end{definition}

An important parameter of any equally-transmitting vertex coupling (except for the decoupled vertex, see Rem.~\ref{decoupled}) is the ratio
\begin{equation}\label{rho}
\rho(k)=\frac{|\S(k)_{jj}|^2}{|\S(k)_{j\ell}|^2} \qquad (j\neq\ell)
\end{equation}
which has the following physical meaning:
$$
\rho(k)=\frac{\text{reflection probability at momentum $k$}}{\text{transmission probability at momentum $k$}}\,.
$$

Equally-transmitting couplings form a much richer family than \PS couplings do, and there does not exist so far a simple characterization of them. Apparently, Observation~\ref{PS U} \emph{cannot} be straightforwardly extended to them. The examination of equally-transmitting couplings has been initiated probably by the paper~\cite{HSW07} whose authors studied vertex couplings with these two properties: (i) $\S(1)$ is \MPS, (ii) $\diag(\S(1))=0$, in other words, couplings which are uniformly transmitting and non-reflecting at $k=1$. Later, in paper~\cite{TC11}, we 
have examined scale invariant equally-transmitting vertex couplings. Section~\ref{GPS couplings} of this paper will be devoted to an extension of these results.


\section{Vertex couplings for $U$ having at most $2$ eigenvalues}

With regard to Observations~\ref{PS2ev} and~\ref{MPSequiv}, vertex couplings given by $U$ having at most two eigenvalues seem to form a natural common generalization of permutation-symmetric couplings and scale invariant couplings.

For any unitary matrix $U$ having exactly two eigenvalues, $\sigma(U)=\{\e^{\i\alpha},\e^{\i\beta}\}$, there exist orthogonal projectors $P$ ($P\neq0$) and $Q=I-P$ such that
\begin{equation}\label{U(P)}
U=\e^{\i\alpha}P+\e^{\i\beta}Q=(\e^{\i\alpha}-\e^{\i\beta})P+\e^{\i\beta}I\,.
\end{equation}
In case that $U=\{\e^{\i\alpha}\}$, Equation~\eqref{U(P)} can be used as well after setting $P=I$ and taking an arbitrary $\beta$.

Since $P$ is Hermitian and $\sigma(P)\subset\{0,1\}$, the matrix $M$ defined as $M:=2P-I$ is Hermitian and satisfies $\sigma(M)\subset\{-1,1\}$, thus $M$ is Hermitian unitary.

\begin{proposition}\label{U M}
If $U$ is a unitary matrix having at most two eigenvalues, then there exist $\alpha,\beta\in(-\pi,\pi]$ and a Hermitian unitary matrix $M$ such that
\begin{equation}\label{U(M)}
U=\e^{\i\frac{\alpha+\beta}{2}}\cdot\left(\cos\frac{\alpha-\beta}{2}\cdot I+\i\sin\frac{\alpha-\beta}{2}\cdot M\right)\,.
\end{equation}
\end{proposition}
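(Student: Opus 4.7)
The plan is to follow the construction already sketched in the paragraph preceding the proposition and then apply standard half-angle identities to bring the expression into the claimed form. Concretely, I will start from the spectral-type decomposition \eqref{U(P)}, namely $U=\e^{\i\alpha}P+\e^{\i\beta}Q$ with $P,Q$ orthogonal projectors such that $P+Q=I$ (for $\#\sigma(U)=1$ one simply takes $P=I$ and any $\beta$, so this single-eigenvalue case is subsumed). As already observed in the text, the matrix $M:=2P-I$ is Hermitian unitary because $P$ is Hermitian with spectrum in $\{0,1\}$, so $M$ is Hermitian with spectrum in $\{-1,1\}$.

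Next I would invert the relation between $P$ and $M$ to get $P=\tfrac{1}{2}(I+M)$ and $Q=I-P=\tfrac{1}{2}(I-M)$, substitute into $U=\e^{\i\alpha}P+\e^{\i\beta}Q$, and collect terms:
\[
U=\tfrac{1}{2}(\e^{\i\alpha}+\e^{\i\beta})\,I+\tfrac{1}{2}(\e^{\i\alpha}-\e^{\i\beta})\,M\,.
\]
Then I would factor out $\e^{\i(\alpha+\beta)/2}$ from each exponential and use the elementary identities
\[
\e^{\i\alpha}+\e^{\i\beta}=2\e^{\i\frac{\alpha+\beta}{2}}\cos\tfrac{\alpha-\beta}{2},\qquad \e^{\i\alpha}-\e^{\i\beta}=2\i\,\e^{\i\frac{\alpha+\beta}{2}}\sin\tfrac{\alpha-\beta}{2},
\]
which immediately yield \eqref{U(M)}. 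The parameters $\alpha,\beta$ may be chosen in $(-\pi,\pi]$ because they are just the arguments of the two eigenvalues of $U$ on the unit circle.

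There is essentially no serious obstacle: the construction of $M$ from $P$ is given for free in the preamble of the proposition, and the rest is a one-line trigonometric identity. The only point worth being careful about is the degenerate case $\#\sigma(U)=1$, where the decomposition \eqref{U(P)} is non-unique; but with the convention $P=I$, $Q=0$, and any choice of $\beta$, the Hermitian unitary $M=2P-I=I$ works, and \eqref{U(M)} reduces to $U=\e^{\i\alpha}I$, which is correct. Thus the proposition follows directly from the algebraic identity above together with the Hermitian unitarity of $M$ that has already been established.
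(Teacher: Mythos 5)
Your proof is correct and follows essentially the same route as the paper: both start from the decomposition $U=\e^{\i\alpha}P+\e^{\i\beta}Q$, set $M=2P-I$, substitute $P=\tfrac{1}{2}(I+M)$, and finish with the half-angle exponential identities. The only difference is that you spell out the degenerate single-eigenvalue case explicitly, which the paper handles by the same convention in the preceding paragraph.
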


\begin{proof}
Let $U$ be given by~\eqref{U(P)}, and $M=2P-I$. Then $P=\frac{1}{2}(M+I)$ and a straightforward calculation gives
\begin{equation*}
U=\e^{\i\frac{\alpha+\beta}{2}}\cdot\left(\frac{\e^{\i\frac{\alpha-\beta}{2}}-\e^{-\i\frac{\alpha-\beta}{2}}}{2}M+\frac{\e^{\i\frac{\alpha-\beta}{2}}+\e^{-\i\frac{\alpha-\beta}{2}}}{2}I\right)\,.
\end{equation*}
\end{proof}

With expression~\eqref{U(M)} in hand, we can state the fact mentionned at the beginning of this section more precisely.

\begin{observation}\label{subfamilies}
The family of couplings whose $U$ have at most two eigenvalues contains the following subfamilies:
\begin{itemize}
\item \PS couplings: they can be obtained from~\eqref{U(M)} exactly for $M=-I^{(n)}+\frac{2}{n}J^{(n)}$ and $M=I^{(n)}-\frac{2}{n}J^{(n)}$;
\item scale invariant couplings: they can be obtained from~\eqref{U(M)} with $\{\alpha,\beta\}\subset\{0,\pi\}$.
\end{itemize}
\end{observation}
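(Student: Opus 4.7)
The strategy is to combine the spectral representation~\eqref{U(M)} of Proposition~\ref{U M} with the explicit characterizations of \PS and scale invariant couplings recorded in Observation~\ref{PS U} and Observation~\ref{MPSequiv}. In both subfamilies the claim reduces to identifying the eigenvalues and spectral projectors of $U$.

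For the \PS subfamily, I would start from the characterization $U = a I^{(n)} + b J^{(n)}$ with $|a|=|a+nb|=1$. Since $J^{(n)}$ has spectrum $\{0,n\}$, the orthogonal projector onto its $n$-eigenspace (the span of $(1,\ldots,1)^{T}$) is $\frac{1}{n}J^{(n)}$, and the projector onto its kernel is $I^{(n)} - \frac{1}{n}J^{(n)}$. Hence $U$ has eigenvalues $a+nb$ and $a$ carried by these two projectors. Applying the construction preceding Proposition~\ref{U M} with $P=\frac{1}{n}J^{(n)}$ gives $M=2P-I=-I^{(n)}+\frac{2}{n}J^{(n)}$, while the opposite assignment $P=I^{(n)}-\frac{1}{n}J^{(n)}$ (i.e., swapping $\alpha\leftrightarrow\beta$) yields $M = I^{(n)} - \frac{2}{n}J^{(n)}$. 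For the converse, I would substitute either of these matrices into~\eqref{U(M)} and use the identity $2\i\sin(x)\e^{\i y}=\e^{\i(y+x)}-\e^{\i(y-x)}$ to collapse the coefficient of $J^{(n)}$, obtaining $U = a I^{(n)} + b J^{(n)}$ with $a = \e^{\i\beta}$ and $a+nb = \e^{\i\alpha}$; thus $|a|=|a+nb|=1$, and Proposition~\ref{aIbJ} confirms that $U$ is \PS.

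For the scale invariant subfamily, I rely on Observation~\ref{MPSequiv}, which identifies this class with the couplings whose $U$ is Hermitian. Since the spectrum of a unitary matrix lies on the unit circle, $U$ is Hermitian iff $\sigma(U)\subset\{-1,1\}$. In~\eqref{U(M)} the spectrum of $U$ is $\{\e^{\i\alpha},\e^{\i\beta}\}$ by construction via~\eqref{U(P)}, so this becomes $\e^{\i\alpha},\e^{\i\beta}\in\{-1,1\}$, i.e.\ $\{\alpha,\beta\}\subset\{0,\pi\}$. The reverse implication is immediate: the four choices $(\alpha,\beta)\in\{0,\pi\}^2$ yield $U\in\{\pm I^{(n)},\pm M\}$, all of which are Hermitian.

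The only point demanding care, rather than a genuine obstacle, is the degenerate case in which $U$ has a single eigenvalue, where~\eqref{U(M)} allows a free choice of $\beta$. This is harmless: in the \PS case both listed matrices $M$ correspond to legitimate orderings of the (coinciding) eigenvalues, and in the scale invariant case the remaining free parameter can always be picked inside $\{0,\pi\}$, so that each representative of each subfamily is indeed captured by the prescribed form.
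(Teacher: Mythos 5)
Your proof is correct and follows exactly the route the paper intends: the paper states this as an Observation without proof, and the natural justification is precisely your combination of the spectral projectors $\frac{1}{n}J^{(n)}$ and $I^{(n)}-\frac{1}{n}J^{(n)}$ of a \PS matrix $U=aI^{(n)}+bJ^{(n)}$ (via Proposition~\ref{aIbJ}) with the Hermiticity criterion of Observation~\ref{MPSequiv}. Your attention to the degenerate single-eigenvalue case is a welcome extra detail; no gaps.
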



\section{Scattering properties}\label{Scat. prop.}

One of the main properties of the studied family that makes it physically important is a very simple way in which the scattering matrices depend on the momentum $k$.

\begin{proposition}
Let the vertex coupling be given by boundary conditions~\eqref{U-form}, where $U$ satisfies~\eqref{U(M)}, i.e.,
$$
U=\e^{\i\frac{\alpha+\beta}{2}}\cdot\left[\cos\frac{\alpha-\beta}{2}\cdot I+\i\sin\frac{\alpha-\beta}{2}\cdot M\right]\,,\quad\alpha,\beta\in(-\pi,\pi]
$$
and $M$ is a Hermitian unitary matrix.
Then
\begin{equation}\label{S-matrix2}
\S(k)=\frac{\left(k\cdot\cos\frac{\alpha}{2}\cos\frac{\beta}{2}+\frac{1}{k}\cdot\sin\frac{\alpha}{2}\sin\frac{\beta}{2}\right)\cdot I+\i\sin\frac{\alpha-\beta}{2}\cdot M}{k\cdot\cos\frac{\alpha}{2}\cos\frac{\beta}{2}-\frac{1}{k}\cdot\sin\frac{\alpha}{2}\sin\frac{\beta}{2}-\i\sin\frac{\alpha+\beta}{2}}\,.
\end{equation}
\end{proposition}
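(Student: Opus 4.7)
The plan is to plug the assumed form of $U$ directly into~\eqref{S-matrix} and reduce the whole calculation to scalar algebra by exploiting the fact that $M^2=I$. Set, for brevity, $\omega=\e^{\i(\alpha+\beta)/2}$, $c=\cos\frac{\alpha-\beta}{2}$, $s=\sin\frac{\alpha-\beta}{2}$, so that $U=\omega(cI+\i sM)$. Both matrices appearing in~\eqref{S-matrix} then take the scalar-plus-$M$ form
\begin{align*}
(k-1)U+(k+1)I &= A\,I+B\,M,\\
(k+1)U+(k-1)I &= C\,I+D\,M,
\end{align*}
with $A=(k-1)\omega c+(k+1)$, $B=(k-1)\omega\i s$, $C=(k+1)\omega c+(k-1)$, and $D=(k+1)\omega\i s$.

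Because $M$ is Hermitian unitary, $M^2=I$, so $\{\xi I+\eta M:\xi,\eta\in\C\}$ is a commutative subalgebra in which
$$
(\xi I+\eta M)(\xi'I+\eta'M)=(\xi\xi'+\eta\eta')I+(\xi\eta'+\xi'\eta)M,\qquad (\xi I+\eta M)^{-1}=\frac{\xi I-\eta M}{\xi^2-\eta^2}
$$
whenever $\xi^2\neq\eta^2$. Applying these rules to $[AI+BM]^{-1}[CI+DM]$ reduces~\eqref{S-matrix} to
$$
\S(k)=\frac{(AC-BD)\,I+(AD-BC)\,M}{A^2-B^2},
$$
a $\C$-linear combination of $I$ and $M$ whose coefficients are explicit scalar rational expressions in $k$ and in $\omega,c,s$.

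The remaining task is trigonometric simplification to put this combination into the form~\eqref{S-matrix2}. The key identities are $\omega+\omega^{-1}=2\cos\frac{\alpha+\beta}{2}$, $\omega-\omega^{-1}=2\i\sin\frac{\alpha+\beta}{2}$, and the sum-to-product formulas
$\cos\frac{\alpha-\beta}{2}+\cos\frac{\alpha+\beta}{2}=2\cos\frac{\alpha}{2}\cos\frac{\beta}{2}$, $\cos\frac{\alpha-\beta}{2}-\cos\frac{\alpha+\beta}{2}=2\sin\frac{\alpha}{2}\sin\frac{\beta}{2}$. A direct expansion gives $AD-BC=4k\omega\cdot\i s$ and
$$
A^2-B^2=4k\omega\Bigl[k\cos\tfrac{\alpha}{2}\cos\tfrac{\beta}{2}-\tfrac{1}{k}\sin\tfrac{\alpha}{2}\sin\tfrac{\beta}{2}-\i\sin\tfrac{\alpha+\beta}{2}\Bigr],
$$
so the coefficient of $M$ in $\S(k)$ collapses at once to $\i\sin\frac{\alpha-\beta}{2}$ divided by the advertised denominator. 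An analogous rearrangement yields $AC-BD=4\omega\bigl[k^2\cos\frac{\alpha}{2}\cos\frac{\beta}{2}+\sin\frac{\alpha}{2}\sin\frac{\beta}{2}\bigr]$; dividing by $A^2-B^2$ produces exactly the $I$-coefficient of~\eqref{S-matrix2}.

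The main obstacle is therefore not conceptual but purely computational: the matrix step is a one-line reduction, but the trig bookkeeping must be carried out carefully so that what comes out is already in the half-angle product form $\cos\frac{\alpha}{2}\cos\frac{\beta}{2}$, $\sin\frac{\alpha}{2}\sin\frac{\beta}{2}$ rather than an unreduced mixed-angle expression. Finally, an invertibility caveat is in order: the derivation is valid at every $k>0$ outside the finite set where the scalar denominator in~\eqref{S-matrix2} vanishes, and at those isolated $k$ the identity extends by continuity.
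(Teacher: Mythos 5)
Your proof is correct and follows essentially the same route as the paper's: substitute the given form of $U$ into~\eqref{S-matrix} and reduce everything to scalar algebra using the two-eigenvalue structure of $U$ (the paper phrases this via the spectral projectors $P=\frac{1}{2}(I+M)$, $Q=\frac{1}{2}(I-M)$, you via the relation $M^2=I$ in the commutative algebra $\C I+\C M$, which is the same thing). Your closing invertibility caveat is actually vacuous: since $U$ is unitary, $A\pm B=(k-1)\e^{\i\alpha}+(k+1)$ and $(k-1)\e^{\i\beta}+(k+1)$ never vanish for $k>0$ (because $k+1>|k-1|$), so $A^2-B^2\neq0$ and the denominator of~\eqref{S-matrix2} is nonzero for every $k>0$ with no exceptional set to worry about.
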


\begin{proof}
Formula~\eqref{S-matrix2} can be obtained from Equation~\eqref{S-matrix} easily with the aid of the spectral decomposition~\eqref{U(P)} of $U$.
\end{proof}

Therefore, if $U$ has at most two eigenvalues, the corresponding scattering matrix~\eqref{S-matrix2} has the form
\begin{equation}\label{munu}
\S(k)=\mu(k)I+\nu(k)M\,,
\end{equation}
where $M$ is a Hermitian unitary matrix. It means that
\begin{itemize}
\item the off-diagonal part of $\S(k)$ is uniformly scaled with $k$,
\item the diagonal part of $\S(k)$ is a sum of two uniformly scaled components, $I$ and $\diag(M)$.
\end{itemize}

\begin{corollary}\label{TransRat}
Let a vertex coupling be given by boundary conditions~\eqref{U-form}, where $U$ satisfies~\eqref{U(M)}.
Then for any $j,\ell$ ($j\neq\ell$) and $j',\ell'$ ($j'\neq\ell'$) such that $M_{j'\ell'}\neq0$, the quantity
$$
\frac{|\S(k)_{j\ell}|}{|\S(k)_{j'\ell'}|}
$$
is constant with respect to $k$. Consequently, the ratios of transmission probabilities are independent of energy.
\end{corollary}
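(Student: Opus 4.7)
The plan is to read off the off-diagonal entries of $\S(k)$ directly from the structural form established in the preceding proposition, namely $\S(k)=\mu(k)I+\nu(k)M$ with $M$ Hermitian unitary. Since $I_{j\ell}=0$ whenever $j\neq\ell$, for every pair of distinct indices $j,\ell$ one has
$$
\S(k)_{j\ell}=\nu(k)\,M_{j\ell}\,,\qquad\text{and hence}\qquad|\S(k)_{j\ell}|=|\nu(k)|\cdot|M_{j\ell}|\,.
$$
The momentum dependence therefore factorizes completely: all off-diagonal moduli share the common $k$-dependent prefactor $|\nu(k)|$, while the pairwise pattern of sizes is frozen into $M$.

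Forming the ratio for $(j,\ell)$ and for $(j',\ell')$ with $M_{j'\ell'}\neq 0$, the factor $|\nu(k)|$ cancels, leaving
$$
\frac{|\S(k)_{j\ell}|}{|\S(k)_{j'\ell'}|}=\frac{|M_{j\ell}|}{|M_{j'\ell'}|}\,,
$$
which is manifestly independent of $k$. Squaring both sides converts moduli of scattering amplitudes into transmission probabilities, so the ratios of transmission probabilities are likewise $k$-independent, giving the claimed consequence.

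The only bookkeeping point, which is not a genuine obstacle, is that $\nu(k)$ (read from formula~\eqref{S-matrix2}) may vanish at isolated values of $k$; at such $k$ the ratio is of the form $0/0$ and should be read as undefined. For all other momenta the formula above applies verbatim, so the statement holds wherever the quotient is defined. The whole argument is thus essentially a one-line extraction from the representation~\eqref{munu} of $\S(k)$, with the condition $M_{j'\ell'}\neq 0$ serving only to ensure that the denominator does not vanish identically in $k$.
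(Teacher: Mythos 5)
Your argument is correct and is exactly the reasoning the paper intends: the corollary is an immediate consequence of the representation $\S(k)=\mu(k)I+\nu(k)M$, from which the off-diagonal entries are $\nu(k)M_{j\ell}$ and the common factor $|\nu(k)|$ cancels in the ratio. (Your caveat about $\nu(k)$ vanishing at isolated $k$ is harmless but unnecessary, since by~\eqref{S-matrix2} the coefficient of $M$ in the numerator is the $k$-independent constant $\i\sin\frac{\alpha-\beta}{2}$, which is nonzero unless the vertex is decoupled.)
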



\section{Equally-transmitting couplings}\label{GPS couplings}

The main goal of this section, and also of the whole paper, is to describe a wide family of equally-transmitting vertex couplings, for which results of Section~\ref{Scat. prop.} will be used.

If we start from~\eqref{munu}, we immediately obtain:

\begin{observation}\label{constdiag}
Let a vertex coupling be given by boundary conditions~\eqref{U-form}, where
$$
U=\e^{\i\frac{\alpha+\beta}{2}}\cdot\left[\cos\frac{\alpha-\beta}{2}\cdot I+\i\sin\frac{\alpha-\beta}{2}\cdot M\right]\,,\quad\alpha,\beta\in(-\pi,\pi]
$$
and $M$ is a Hermitian unitary matrix. Then the coupling is equally-transmitting if and only if $M$ is \MPS.
\end{observation}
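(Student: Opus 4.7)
The plan is to work directly with the structural identity $\S(k)=\mu(k)I+\nu(k)M$ from~\eqref{munu}, reading off $\mu(k)$ and $\nu(k)$ from~\eqref{S-matrix2}. The key observation is that these two coefficients share a common complex denominator $D(k)$ while their numerators are respectively real and purely imaginary—the latter being $\i\sin\frac{\alpha-\beta}{2}$. Because $M$ is Hermitian, its diagonal entries $M_{jj}$ are real, so after clearing the common denominator the modulus-squared of each entry of $\S(k)$ takes the tidy form
\begin{equation*}
|D(k)|^2\,|\S(k)_{jj}|^2 = R(k)^2+S^2 M_{jj}^2,\qquad |\S(k)_{j\ell}|=|\nu(k)|\,|M_{j\ell}|\quad(j\neq\ell),
\end{equation*}
where $R(k)$ is the (real) numerator of $\mu(k)$ and $S:=\sin\frac{\alpha-\beta}{2}$.

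From here the ``if'' direction is immediate. If $M$ is \MPS with $|M_{jj}|=r$ and $|M_{j\ell}|=t$, then $M_{jj}^2=r^2$ is independent of $j$, so both displayed quantities become independent of the relevant indices, making $\S(k)$ itself \MPS for every $k>0$.

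For the converse, I would assume $\S(k)$ is \MPS for all $k>0$ and treat first the generic regime $\alpha\neq\beta$. Then $S\neq0$ and $\nu(k)\not\equiv0$, so the constancy of $|\S(k)_{j\ell}|$ on off-diagonal positions forces $|M_{j\ell}|$ to be constant for $j\neq\ell$, while the constancy of $R(k)^2+S^2 M_{jj}^2$ in $j$ forces $M_{jj}^2$, and hence $|M_{jj}|$, to be independent of $j$. Therefore $M$ is \MPS.

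The step I expect to require the most care is the degenerate case $\alpha=\beta$: here $S=0$, $\nu(k)\equiv0$, $U=\e^{\i\alpha}I$ is decoupled, and $\S(k)=\mu(k)I$ is trivially \MPS (with off-diagonal modulus zero) regardless of $M$, so $M$ cannot be recovered from $\S(k)$. Fortunately the decomposition furnished by Proposition~\ref{U M} picks out $M=I$ in this situation (via $P=I$ in its proof), and $M=I$ is itself \MPS. Thus the equivalence remains valid under the canonical choice of $M$, closing the proof.
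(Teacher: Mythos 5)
Your proof is correct and follows essentially the same route as the paper, which states the observation as an immediate consequence of the decomposition $\S(k)=\mu(k)I+\nu(k)M$ in~\eqref{munu} without writing out the details; your computation of $|D(k)|^2|\S(k)_{jj}|^2=R(k)^2+S^2M_{jj}^2$ and $|\S(k)_{j\ell}|=|\nu(k)|\,|M_{j\ell}|$ is exactly the intended fleshing-out. Your explicit treatment of the degenerate case $\alpha=\beta$ (where $\nu\equiv0$ and $M$ is not determined by $U$, so the ``only if'' direction needs the canonical choice $M=I$) is a point the paper glosses over, and it is handled correctly.
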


Let us note that if $M$ is a diagonal matrix, then $U$ is diagonal as well, which corresponds to a decoupled vertex (cf. Rem.~\ref{decoupled}). For this reason we assume from now on that
\begin{equation}\label{assumption}
\text{$M$ is a \emph{non-diagonal} Hermitian unitary \MPS matrix}\,.
\end{equation}

In the next step we express the ratio $\rho(k)=\frac{|\S(k)_{jj}|^2}{|\S(k)_{j\ell}|^2}$, defined in~\eqref{rho}, in terms of $k$ and $d=\frac{|M_{jj}|}{|M_{j\ell}|}$.

\begin{proposition}\label{rhoProp}
Let a vertex couplings be given by boundary conditions~\eqref{U-form}, where $U$ is given by~\eqref{U(M)} and $M$ satisfies~\eqref{assumption}, let $d=\frac{|M_{jj}|}{|M_{j\ell}|}$. Then
\begin{equation}\label{rho(k)}
\rho(k)=d^2+(d^2+n-1)\cdot\frac{\left(k\cdot\cos\frac{\alpha}{2}\cos\frac{\beta}{2}+\frac{1}{k}\cdot\sin\frac{\alpha}{2}\sin\frac{\beta}{2}\right)^2}{\sin^2\frac{\alpha-\beta}{2}}\,.
\end{equation}
\end{proposition}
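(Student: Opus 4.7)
The plan is to start from the explicit formula \eqref{S-matrix2}, which writes $\S(k)$ in the form $\S(k)=\mu(k)I+\nu(k)M$ with
$$
\mu(k)=\frac{k\cos\tfrac{\alpha}{2}\cos\tfrac{\beta}{2}+\tfrac{1}{k}\sin\tfrac{\alpha}{2}\sin\tfrac{\beta}{2}}{D(k)}\,,\qquad \nu(k)=\frac{\i\sin\tfrac{\alpha-\beta}{2}}{D(k)}\,,
$$
where $D(k)=k\cos\tfrac{\alpha}{2}\cos\tfrac{\beta}{2}-\tfrac{1}{k}\sin\tfrac{\alpha}{2}\sin\tfrac{\beta}{2}-\i\sin\tfrac{\alpha+\beta}{2}$. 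From the explicit form of a non-diagonal Hermitian unitary \MPS matrix recalled in the Preliminaries, $|M_{j\ell}|^2=\frac{1}{d^2+n-1}$ for $j\neq\ell$ and $M_{jj}\in\R$ with $M_{jj}^2=\frac{d^2}{d^2+n-1}$. I would then read off
$$
|\S(k)_{j\ell}|^2=|\nu(k)|^2\,|M_{j\ell}|^2=\frac{|\nu(k)|^2}{d^2+n-1}\qquad(j\neq\ell).
$$

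For the diagonal entries I would use $\S(k)_{jj}=\mu(k)+\nu(k)M_{jj}$ and the reality of $M_{jj}$ to get
$$
|\S(k)_{jj}|^2=|\mu(k)|^2+2M_{jj}\,\Re\!\bigl(\mu(k)\overline{\nu(k)}\bigr)+M_{jj}^2\,|\nu(k)|^2.
$$
The key (and essentially only nontrivial) step is to observe that the cross term vanishes: in the formulas for $\mu(k)$ and $\nu(k)$ the denominators are identical, the numerator of $\mu(k)$ is real, and the numerator of $\nu(k)$ is purely imaginary, so $\mu(k)\overline{\nu(k)}=\frac{1}{|D(k)|^2}\cdot(\text{real})\cdot(-\i\sin\tfrac{\alpha-\beta}{2})$ is purely imaginary. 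Consequently $\Re(\mu(k)\overline{\nu(k)})=0$ and
$$
|\S(k)_{jj}|^2=|\mu(k)|^2+\frac{d^2}{d^2+n-1}\,|\nu(k)|^2.
$$
In particular, this confirms that $|\S(k)_{jj}|$ is independent of $j$, consistent with Observation~\ref{constdiag}.

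Taking the ratio and cancelling the factor $(d^2+n-1)$ then yields
$$
\rho(k)=\frac{|\S(k)_{jj}|^2}{|\S(k)_{j\ell}|^2}=d^2+(d^2+n-1)\,\frac{|\mu(k)|^2}{|\nu(k)|^2}.
$$
Since the common denominator $|D(k)|^2$ cancels from the last quotient, I am left with
$$
\frac{|\mu(k)|^2}{|\nu(k)|^2}=\frac{\bigl(k\cos\tfrac{\alpha}{2}\cos\tfrac{\beta}{2}+\tfrac{1}{k}\sin\tfrac{\alpha}{2}\sin\tfrac{\beta}{2}\bigr)^2}{\sin^2\tfrac{\alpha-\beta}{2}},
$$
which upon substitution produces exactly \eqref{rho(k)}. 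The only real obstacle is the vanishing of $\Re(\mu\overline{\nu})$; once this is noted, the rest is direct algebra on the expressions provided by \eqref{S-matrix2} and the standard form of a Hermitian unitary \MPS matrix.
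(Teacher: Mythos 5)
Your proof is correct and follows essentially the same route as the paper: both arguments read off the diagonal and off-diagonal entries from~\eqref{S-matrix2}, use that the common denominator cancels in the ratio, that the cross term vanishes because the numerator of the $I$-coefficient is real while that of the $M$-coefficient is purely imaginary (with $M_{jj}$ real by Hermiticity), and that unitarity of an \MPS row gives $|M_{j\ell}|^{-2}=d^2+n-1$. You merely make explicit, via the $\mu(k)I+\nu(k)M$ decomposition, the step the paper performs in one line.
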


\begin{proof}
From~\eqref{S-matrix2} we obtain
\begin{multline*}
\rho(k)=\frac{|k\cdot\cos\frac{\alpha}{2}\cos\frac{\beta}{2}+\frac{1}{k}\cdot\sin\frac{\alpha}{2}\sin\frac{\beta}{2}+\i\sin\frac{\alpha-\beta}{2}\cdot M_{jj}|^2}{|\i\sin\frac{\alpha-\beta}{2}\cdot M_{j\ell}|^2}
=\frac{1}{|M_{j\ell}|^2}\cdot\frac{\left(k\cdot\cos\frac{\alpha}{2}\cos\frac{\beta}{2}+\frac{1}{k}\cdot\sin\frac{\alpha}{2}\sin\frac{\beta}{2}\right)^2}{\sin^2\frac{\alpha-\beta}{2}}+d^2\,,
\end{multline*}
and since $M$ is unitary and \MPS, it holds $\frac{1}{|M_{j\ell}|^2}=\frac{|M_{jj}|^2+(n-1)|M_{j\ell}|^2}{|M_{j\ell}|^2}=d^2+n-1$.
\end{proof}

With regard to Proposition~\ref{rhoProp}, possible behaviours of $\rho(k)$ can be classified into four classes according to $\alpha$ and $\beta$; let us discuss them one by one.

\subsection*{Scattering of type I: Scale invariant coupling}

If $\{\alpha,\beta\}=\{0,\pi\}$, i.e., $\sigma(U)=\{1,-1\}$, we have
$$
\S(k)=\pm M \quad \text{for all $k\in[0,+\infty)$}\,,
$$
which represents a scale invariant coupling. Its reflection and transmission probabilities are independent of $k$ and
$$
\rho(k)=d^2=\textrm{const.}
$$

\subsection*{Scattering of type II: Generalized $\delta$-coupling}

If $\alpha\notin\{0,\pi\}$ and $\beta=\pi$ (alternatively vice versa), i.e., $\sigma(U)=\{\e^{\i\alpha},-1\}$, we obtain from~\eqref{rho(k)}
$$
\rho(k)=d^2+(d^2+n-1)\,\tg^2\frac{\alpha}{2}\cdot\frac{1}{k^2}\,.
$$
Note that if $M$ with certain $d$ is fixed and $|\alpha|$ ranges the interval $(0,\pi)$, then the factor $c_d(\alpha):=(d^2+n-1)\tg^2\frac{\alpha}{2}$ standing in front of $\frac{1}{k^2}$ ranges $(0,+\infty)$. Consequently, with a given $M$, we can construct a vertex coupling whose scattering matrix is \MPS and the ratio $\rho(k)$ satisfies
$$
\rho(k)=d^2+c\cdot\frac{1}{k^2} \qquad\text{for any desired $c>0$}\,.
$$
System of this type is fully reflecting, $\S(k)\sim -I$, at $k\to0$. Physically speaking, this can be identified as \emph{generalized $\delta$-coupling}, see also~\cite{CET09}. Prominent 
representatives 
of this family are:
\begin{itemize}
\item $\delta$-coupling, obtained for $M=-I+\frac{2}{n}J$,
\item $\delta_p$-coupling, obtained for $M=I-\frac{2}{n}J$.
\end{itemize}
It can be shown that the parameter $\gamma$ of the $\delta$-coupling (cf. Table~\ref{DeltaCouplings}), as well as the parameter $\gamma$ of the $\delta_p$-coupling, is related to $\alpha$ in the way $\gamma=-n\cdot\tg\frac{\alpha}{2}$.

Let us define \emph{principal parameter} $\gamma:=-n\cdot\tg\frac{\alpha}{2}$ for any generalized $\delta$-coupling determined by an $n\times n$ matrix $U$ satisfying $\sigma(U)=\{\e^{\i\alpha},-1\}$. Apparently, the principal parameter of the standard $\delta$- and $\delta_p$-coupling coincides with their parameters from Table~\ref{DeltaCouplings}. All the generalized $\delta$-couplings with the same value of the principal parameter have many common physical properties, for example, it can be easily checked that the corresponding star graphs with infinite edges have the same point spectrum.

\subsection*{Scattering of type III: Generalized $\delta'$-coupling}

If $\alpha=0$ and $\beta\notin\{0,\pi\}$ (alternatively vice versa), i.e., $\sigma(U)=\{1,\e^{\i\beta}\}$, we obtain
$$
\rho(k)=d^2+(d^2+n-1)\,\cotg^2\frac{\beta}{2}\cdot k^2\,.
$$
In the same way as in the previous case, we deduce that with a given $M$, one can construct a vertex coupling whose scattering matrix is \MPS and
$$
\rho(k)=d^2+c\cdot k^2 \qquad\text{for any desired $c>0$}\,.
$$
System of this type is fully reflecting, $\S(k)\sim I$, at $k\to\infty$. Physically speaking, this can be identified as generalized $\delta'$-coupling. Its prominent 
representatives 
are (cf. Table~\ref{DeltaCouplings}):
\begin{itemize}
\item $\delta'$-coupling, obtained for $M=-I+\frac{2}{n}J$,
\item $\delta'_s$-coupling, obtained for $M=I-\frac{2}{n}J$.
\end{itemize}
It can be shown that the parameter $\gamma'$ of the $\delta'$-coupling (cf. Table~\ref{DeltaCouplings}), as well as the parameter $\gamma'$ of the $\delta'_s$-coupling, is related to $\beta$ in the way $\gamma'=-n\cdot\cotg\frac{\beta}{2}$.

We define \emph{principal parameter} $\gamma':=-n\cdot\cotg\frac{\beta}{2}$ for any generalized $\delta'$-coupling determined by an $n\times n$ matrix $U$ satisfying $\sigma(U)=\{1,\e^{\i\beta}\}$. The principal parameter of the standard $\delta'$- and $\delta'_s$-coupling coincides with their parameters from Table~\ref{DeltaCouplings}.

\subsection*{Scattering of type IV: Mixed coupling}

If none of the numbers $\alpha,\beta$ belongs to $\{0,\pi\}$, i.e., if $\sigma(U)\cap\{1,-1\}=\emptyset$, we introduce the parameter $\xi\in(-\frac{\pi}{2},\frac{\pi}{2})\backslash\{0\}$ by the relation
\begin{equation}\label{xi}
\tg\xi=\frac{\sin\frac{\alpha}{2}\sin\frac{\beta}{2}}{\cos\frac{\alpha}{2}\cos\frac{\beta}{2}}=\tg\frac{\alpha}{2}\tg\frac{\beta}{2}\,;
\end{equation}
then
\begin{equation}\label{rhoIV}
\rho(k)=d^2+(d^2+n-1)\cdot\frac{\cos^2\frac{\alpha}{2}\cos^2\frac{\beta}{2}+\sin^2\frac{\alpha}{2}\sin^2\frac{\beta}{2}}{\sin^2\frac{\alpha-\beta}{2}}\left(\cos\xi\cdot k+\sin\xi\cdot\frac{1}{k}\right)^2\,.
\end{equation}
This system is fully reflecting, $\S(k)\sim \pm I$, both at $k\to0$ and $k\to\infty$. Considering the previously discussed types of scattering, we can say that the coupling with the characteristics~\eqref{rhoIV} is in a way a mixture of the $\delta$ and $\delta'$ type couplings, where the $\delta/\delta'$ ratio is characterized by $|\tg\xi|$. Note that since $\alpha$ and $\beta$ run over $(-\pi,\pi)\backslash\{0\}$, $|\tg\xi|$ can take any positive value, thus the components of type $\delta$ and $\delta'$ can be mixed in any proportion.

Similarly as for scatterings of type II and III, let us discuss the range of the factor
$$
c:=(d^2+n-1)\cdot\frac{\cos^2\frac{\alpha}{2}\cos^2\frac{\beta}{2}+\sin^2\frac{\alpha}{2}\sin^2\frac{\beta}{2}}{\sin^2\frac{\alpha-\beta}{2}}\,,
$$
if $d$ and $\xi$ are fixed and $\alpha,\beta$ satisfy~\eqref{xi}. We need this lemma:

\begin{lemma}\label{range c}
Let $\xi\in(-\frac{\pi}{2},\frac{\pi}{2})\backslash\{0\}$ be given by~\eqref{xi}. Then the set
$$
\mathcal{F}_\xi:=\left\{\frac{\cos^2\frac{\alpha}{2}\cos^2\frac{\beta}{2}+\sin^2\frac{\alpha}{2}\sin^2\frac{\beta}{2}}{\sin^2\frac{\alpha-\beta}{2}}\;\left|\;\alpha,\beta\in(-\pi,\pi)\backslash\{0\}\,,\,\alpha\neq\beta\,,\,\tg\frac{\alpha}{2}\tg\frac{\beta}{2}=\tg\xi\right.\right\}
$$
equals $(0,+\infty)$ for $\tg\xi>0$ and $\left(0,\frac{1+\tg^2\xi}{4|\tg\xi|}\right]$ for $\tg\xi<0$.
\end{lemma}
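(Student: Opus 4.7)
The plan is to reduce the constrained extremum problem to an elementary one-variable analysis via the half-angle substitution $u=\tg\frac{\alpha}{2}$, $v=\tg\frac{\beta}{2}$. Because $\alpha,\beta\in(-\pi,\pi)\backslash\{0\}$ the half-angles lie in $(-\pi/2,\pi/2)\backslash\{0\}$, and tangent is a bijection from this interval onto $\R\backslash\{0\}$; so $u,v$ range over $\R\backslash\{0\}$, the constraint $\tg\frac{\alpha}{2}\tg\frac{\beta}{2}=\tg\xi$ becomes $uv=\tg\xi$, and $\alpha\neq\beta$ becomes $u\neq v$.

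My first step is to simplify the defining ratio using the standard identities $\cos^2\frac{\alpha}{2}=\frac{1}{1+u^2}$, $\sin^2\frac{\alpha}{2}=\frac{u^2}{1+u^2}$ (and their $v$-analogues), together with the identity $\sin\frac{\alpha-\beta}{2}=(u-v)\cos\frac{\alpha}{2}\cos\frac{\beta}{2}$ obtained by dividing the sine subtraction formula by $\cos\frac{\alpha}{2}\cos\frac{\beta}{2}$. A short calculation shows that numerator and denominator share the factor $(1+u^2)(1+v^2)$, so the whole expression collapses to
\[
\frac{1+u^2v^2}{(u-v)^2}=\frac{1+\tg^2\xi}{(u-v)^2}.
\]
Thus the problem reduces to describing the range of $(u-v)^2$ under the constraint $uv=\tg\xi$ with $u,v\in\R\backslash\{0\}$ and $u\neq v$.

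The remaining step is an elementary case distinction on the sign of $\tg\xi$. If $\tg\xi>0$, then $u$ and $v$ share a sign, and substituting $v=\tg\xi/u$ (say with $u>0$) the quantity $u-\tg\xi/u$ is continuous in $u\in(0,+\infty)$ and sweeps all of $\R$; the only zero corresponds to $u=v=\sqrt{\tg\xi}$, which is excluded. Hence $(u-v)^2$ attains every positive value and $\mathcal{F}_\xi=(0,+\infty)$. If $\tg\xi<0$, the signs are opposite, and for $u>0$ we get $u-v=u+|\tg\xi|/u$; AM--GM produces a strict minimum $2\sqrt{|\tg\xi|}$ attained at $u=\sqrt{|\tg\xi|}$, $v=-\sqrt{|\tg\xi|}$ (an admissible pair), and the expression diverges as $u\to 0^+$ or $u\to+\infty$. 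Therefore $(u-v)^2\in[4|\tg\xi|,+\infty)$, which upon inversion gives $\mathcal{F}_\xi=\bigl(0,\tfrac{1+\tg^2\xi}{4|\tg\xi|}\bigr]$.

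Neither step presents a real obstacle; the only point requiring care is to verify that the supremum in the $\tg\xi<0$ case is genuinely attained (so the upper endpoint is included), whereas $0$ is always excluded because $(u-v)^2>0$ for admissible pairs. Both checks are immediate from the AM--GM equality condition and the exclusion $u\neq v$.
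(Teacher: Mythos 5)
Your proof is correct and follows essentially the same route as the paper: the paper likewise eliminates $\beta$ via the constraint to rewrite the expression as $\frac{1+\tg^2\xi}{\bigl(\tg\frac{\alpha}{2}-\tg\xi/\tg\frac{\alpha}{2}\bigr)^2}$, which is exactly your $\frac{1+\tg^2\xi}{(u-v)^2}$, and then reads off the range by a one-variable analysis. You merely supply the details (the half-angle substitution, the AM--GM minimum, attainment of the endpoint) that the paper dismisses as ``a trivial analysis.''
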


\begin{proof}
At first, we use~\eqref{xi} to eliminate $\beta$:
$$
\frac{\cos^2\frac{\alpha}{2}\cos^2\frac{\beta}{2}+\sin^2\frac{\alpha}{2}\sin^2\frac{\beta}{2}}{\sin^2\frac{\alpha-\beta}{2}}=\frac{1+\tg^2\xi}{\left(\tg\frac{\alpha}{2}-\frac{\tg\xi}{\tg\frac{\alpha}{2}}\right)^2}\,;
$$
now $\alpha\in(-\pi,\pi)\backslash\{0\}$ is the only free parameter.
Then a trivial analysis of this expression as a function of $\alpha$ leads to the sought result:
\begin{itemize}
\item If $\tg\xi>0$, then
$$
\left\{\left.\frac{1+\tg^2\xi}{\left(\tg\frac{\alpha}{2}-\frac{\tg\xi}{\tg\frac{\alpha}{2}}\right)^2}\;\right|\;|\alpha|\in(0,\pi)\right\}=\left(0,+\infty\right)\,.
$$
\item If $\tg\xi<0$, then
$$
\left\{\left.\frac{1+\tg^2\xi}{\left(\tg\frac{\alpha}{2}-\frac{\tg\xi}{\tg\frac{\alpha}{2}}\right)^2}\;\right|\;|\alpha|\in(0,\pi)\right\}=\left(0,\frac{1+\tg^2\xi}{4|\tg\xi|}\right]\,.
$$
\end{itemize}
\end{proof}

The result of Lemma~\ref{range c} can be applied in the following way. Let us have an \MPS Hermitian unitary matrix with $d=\frac{|M_{jj}|}{|M_{j\ell}|}$. Then for any $\xi\in(-\frac{\pi}{2},\frac{\pi}{2})\backslash\{0\}$ we can use $M$ to construct an equally-transmitting vertex coupling with
$$
\rho(k)=d^2+c_{d,\xi}\left(\sin\xi\cdot k+\cos\xi\cdot\frac{1}{k}\right)^2\,,
$$
where $c_{d,\xi}$ can be chosen as any number satisfying
\begin{itemize}
\item $c_{d,\xi}\in(0,+\infty)$ for $\tg\xi>0$,
\item $c_{d,\xi}\in\left(0,(d^2+n-1)\frac{1+\tg^2\xi}{4|\tg\xi|}\right]$ for $\tg\xi<0$.
\end{itemize}

\subsubsection*{On the matrix $M$ and its parameter $d$}

In all types of scattering, $\rho(k)$ depends on the parameter $d$ of the matrix $M$.
\MPS Hermitian unitary matrices, which are essential objects in our considerations, have been studied in~\cite{SL82} and~\cite{TC11}. Let us recall the most important facts concerning their existence.

\emph{1. The range of $d$.} For any fixed order of $M$, the values of the parameter $d$ are bounded, cf.~\cite{TC11}:

\begin{proposition}\label{range of r}
Let $M$ be a Hermitian unitary \MPS matrix of order $n$, $d=\frac{|M_{jj}|}{|M_{j\ell}|}$. If $n>2$, then $d\leq\frac{n}{2}-1$.
\end{proposition}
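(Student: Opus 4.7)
The plan is to exploit the identity $M^2=I$ (which follows from $M$ being Hermitian and unitary) together with the MPS structure. Write the diagonal entries as $M_{jj}=\epsilon_j r$ with $\epsilon_j\in\{-1,+1\}$ (this is legal because $M$ is Hermitian, so $M_{jj}\in\R$, and MPS, so $|M_{jj}|=r$), and write the off-diagonal entries as $M_{j\ell}=t\e^{\i\alpha_{j\ell}}$ with $t=|M_{j\ell}|$ and $\alpha_{\ell j}=-\alpha_{j\ell}$. The goal is a bound of the form $r\leq \frac{n-2}{2}t$, which rewrites as $d\leq n/2-1$.

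First, I would extract an equation from the off-diagonal entries of $M^2=I$. For any $j\neq\ell$,
\begin{equation*}
0=(M^2)_{j\ell}=M_{jj}M_{j\ell}+M_{j\ell}M_{\ell\ell}+\sum_{k\neq j,\ell}M_{jk}M_{k\ell}
= rt(\epsilon_j+\epsilon_\ell)\e^{\i\alpha_{j\ell}}+t^2\sum_{k\neq j,\ell}\e^{\i(\alpha_{jk}+\alpha_{k\ell})}.
\end{equation*}
Moving the diagonal contribution to the other side and applying the triangle inequality to the remaining sum of $n-2$ unit-modulus terms yields
\begin{equation*}
r|\epsilon_j+\epsilon_\ell|\leq (n-2)\,t.
\end{equation*}

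The second step is a pigeonhole argument on the signs $\epsilon_1,\ldots,\epsilon_n$. Since $n>2$, at least two of these $n$ values in $\{-1,+1\}$ must coincide, so there exist indices $j\neq\ell$ with $\epsilon_j=\epsilon_\ell$, for which $|\epsilon_j+\epsilon_\ell|=2$. Plugging this pair into the inequality above gives $2r\leq (n-2)t$, i.e., $d=r/t\leq n/2-1$, as desired.

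No significant obstacle is expected: the only delicate point is making sure that the parametrization $M_{jj}=\epsilon_jr$ is justified (reality of the diagonal from Hermiticity, plus the MPS assumption $|M_{jj}|=r$), and that the hypothesis $n>2$ is precisely what enables the pigeonhole step that forces a pair of equal signs. The case $d=0$ (i.e., $r=0$) is trivial and is covered automatically by the same inequality.
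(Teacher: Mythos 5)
Your proof is correct and complete. Note that the paper itself does not prove this proposition — it only cites \cite{TC11} — and your argument (reading off the vanishing of $(M^2)_{j\ell}$ for $M$ Hermitian unitary, bounding the $n-2$ unit-modulus cross terms by the triangle inequality, and then using the pigeonhole principle to find two equal diagonal signs among $n>2$ of them) is precisely the standard self-contained derivation of the bound $d\leq\frac{n}{2}-1$; the only implicit hypotheses, which you handle correctly, are that $M$ is non-diagonal (so $t>0$ and $d$ is defined) and that the diagonal is real by Hermiticity.
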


\emph{2. Existence of an $M$ for a fixed $d$.} It is known that an \MPS Hermitian unitary $n\times n$ matrix $M$ with the maximal value $d_{\max}=\frac{n}{2}-1$ exists for any $n$, one can take for example $M=-I^{(n)}+\frac{2}{n}J^{(n)}$ or $M=I^{(n)}-\frac{2}{n}J^{(n)}$. As for $d<\frac{n}{2}-1$, the matrices $M$ have been so far characterized only for certain values of $d$, for example:
\begin{itemize}
\item There exists an $M$ with $d\in\left[\frac{n}{2}-3,\frac{n}{2}-1\right)$ for all even $n\in\N$.
\item There exists an $M$ with $d=\frac{n}{4}-\frac{3}{2}$ for all even $n\in\N$.
\end{itemize}
The structure of these matrices, as well as some results for smaller $d$, can be found in~\cite{TC11}. Let us refer also to the paper~\cite{SL82}, where real matrices $M$ are investigated in detail.

Let us mention that the question of characterization of \MPS unitary matrices is not only important for our problem, but is interesting in itself, as it is related to several open problems in matrix theory and combinatorics, such as to the Hadamard conjecture.


\section{Conclusions}

Vertex couplings corresponding to unitary matrices having at most two eigenvalues are characteristic by a simple $k$-dependence of their scattering matrices $\S(k)$. That property allows, using e.g. the method described in~\cite{CET10}, 
after transforming the connection condition into ``$ST$-form'',
to construct quantum vertices with $k$-independent ratios of transmission probabilities.  In this paper, we have focused on equally-transmitting couplings, i.e., couplings for which a probability of the transmission from the $j$-th edge to $\ell$-th edge is independent of $(j,\ell)$ for all momenta $k$, and furthermore, all the reflection probabilities are equal, regardless of the value of $k$. We have discovered two their special subfamilies that can be understood, with regard to their physical properties, as generalizations of the classical $\delta$-coupling and $\delta'$-coupling.

The construction of a vertex coupling with prescribed scattering properties relies on the existence of a certain Hermitian unitary matrix $M$. This fact may serve as a physical motivation for a study of Hermitian unitary matrices with specific relations between the moduli of their elements.

Our results also inspire an interesting question, whether
or not
there exist equally-transmitting couplings different from those we have described here.

\section*{Acknowledgments}
This research was supported  by the Japan Ministry of Education, Culture, Sports, Science and Technology under the Grant number 21540402.



\end{document}